\documentclass[preprint,12pt,authoryear]{elsarticle}

\usepackage{graphicx}%
\usepackage{multirow}%
\usepackage{amsmath,amssymb,amsfonts}%
\usepackage{amsthm}%
\usepackage{mathrsfs}%
\usepackage[title]{appendix}%
\usepackage{xcolor}%
\usepackage{textcomp}%
\usepackage{manyfoot}%
\usepackage{booktabs}%
\usepackage{algorithm}%
\usepackage{algorithmicx}%
\usepackage{algpseudocode}%
\usepackage{listings}%
\usepackage{subcaption} %
\usepackage[table,xcdraw]{xcolor}
\usepackage{graphicx}
\usepackage{lscape}
\usepackage{graphicx}
\usepackage{float}
\usepackage{tablefootnote}

\usepackage{amssymb}
\usepackage{amsmath}

\theoremstyle{thmstyleone}%
\newtheorem{theorem}{Theorem}
\newtheorem{corollary}{Corollary}%

\theoremstyle{thmstyletwo}%
\newtheorem{lemma}{Lemma}%
\newtheorem{result}{Result}%

\theoremstyle{thmstylethree}%
\newtheorem{definition}{Definition}%

\journal{Not Sure}
\usepackage[colorlinks=true, linkcolor=blue, citecolor=blue, urlcolor=blue]{hyperref}

\usepackage{etoolbox}

\makeatletter
\patchcmd{\pprintMaketitle}{\hrule\vskip12pt}{}{}{}
\patchcmd{\pprintMaketitle}{\hrule\vskip12pt}{}{}{}
\patchcmd{\MaketitleBox}{\hrule\vskip12pt}{}{}{}
\patchcmd{\MaketitleBox}{\hrule\vskip12pt}{}{}{}
\makeatother

\begin{document}

\begin{frontmatter}


\title{Network games with three types of players} 


\author[1]{Shan Pei\fnref{equal}}\ead{peishan113@163.com}

\author[2,3]{Wenjie Cao\fnref{equal}}\ead{wenjiecao@mail.bnu.edu.cn}

 

\author[3]{Boyu Zhang\corref{cor1}}\ead{zhangby@bnu.edu.cn}

\fntext[equal]{Shan Pei and Wenjie Cao contributed equally to this work.}
\cortext[cor1]{Corresponding author.}

\affiliation[1]{organization={College of Science, China Agricultural University},
            city={Beijing},
            postcode={100083}, 
            country={PR China}}

\affiliation[2]{organization={Power Grid Planning and Research Center of Guizhou Power Grid Co., Ltd},
           city={Guiyang},
           state={Guizhou},
           postcode={550000}, 
           country={PR China}}

\affiliation[3]{organization={Laboratory of Mathematics and Complex Systems, Ministry of Education, School of Mathematical Sciences, Beijing Normal University},
            city={Beijing},
            postcode={100875}, 
            country={PR China}}


\begin{abstract}
In this paper, we analyze a multi-strategy network game with three types of players, conformists, rebels, and stubborn agents. Conformists adopt the strategy that is most common among their neighbors, rebels adopt the least common, and stubborn agents adhere to a fixed strategy. We study the existence and structure of pure strategy Nash equilibrium (PNE). On arbitrary networks, we establish sufficient conditions for PNE existence, and we prove that in large random networks PNE almost surely fails to exist. For several specific network architectures, such as complete network, lines, rings, trees, and stars, we derive necessary and sufficient conditions for PNE existence and fully characterize the equilibrium strategy frequencies. Collectively, these results offer a unified perspective that PNE is likely to exist when every conformist has more conformist and stubborn neighbors, and fails when the network game has numerous conformist-rebel edges.
\end{abstract}



\begin{keyword}
Network game \sep Heterogeneous players \sep Potential game \sep Pure strategy Nash equilibrium
\end{keyword}

\end{frontmatter}


\newpage
\section{Introduction}


Over the last decades, the theory of network games has grown rapidly. The two most extensively studied classes are games of strategic complements and games of strategic substitutes. In the former, a player’s incentive to adopt an action rises with the number of neighbors taking that action, generating coordination motives; in the latter, that incentive falls, generating anti-coordination motives. Most existing analyses treat these two forms of interaction separately and assume that all players share the same qualitative preference type \citep{Galeotti2010,Hernandez2013,Bramoulle2014,Jackson2015,Bramoulle2016,Ramazi2016}.

In real-world systems, however, players often differ in their strategic preferences. Some tend to coordinate with neighbors, others to anti-coordinate, and still others to adhere to fixed strategies. Studies that allow such heterogeneity have largely concentrated on binary-choice network games with two player types. For instance, \citet{Zhang2018} and \citet{Cao2019} examine the fashion game of \citet{Jackson2008}, which features conformists and rebels. Other work studies conformists together with stubborn agents \citep{Cao2024} or the effect of committed individuals on binary strategy dynamics \citep{Acemoglu2013,Yildiz2013,Stewart2019}. More recent contributions extend the analysis to networks with two arbitrary types \citep{Pei2024}. An important exception is \citet{Cao2026}, who consider binary-choice network games with fully heterogeneous payoffs and show that they can be recast as equivalent games with conformists, rebels, and stubborn agents. However, their analysis remains limited to two strategies. A systematic study of Nash equilibrium in network games with multiple behavioral types and strategies is still lacking.

In this paper, we study an $m$-strategy network game with three types of players, namely conformists, rebels, and stubborn agents, which we refer to as a CRS game. Conformists prefer to match the strategy of their neighbors, rebels prefer to differ, and stubborn agents remain fixed at a predetermined strategy. The model thus integrates coordination incentives, anti-coordination incentives, and fixed behavior in a unified framework. We study the existence and structure of pure-strategy Nash equilibria (PNE) in CRS games and investigate how network topology and the composition of behavioral types jointly shape equilibrium outcomes.

We begin with benchmark results for network games involving only two player types, including a complete characterization of PNE in conformist-rebel games on complete bipartite networks. Turning to the general CRS game on arbitrary networks, we derive sufficient conditions for the existence of a PNE. To characterize non-existence, we identify a simple local configuration whose presence precludes any PNE in the entire network. Using this local obstruction, we prove that a PNE almost surely fails to exist in a broad class of large random networks. These results indicate that, while equilibrium can be sustained by particular network structures, the heterogeneity of behavioral types increasingly generates incompatible incentives as the random network grows.

We next derive necessary and sufficient conditions for PNE existence on several specific network architectures, including complete networks, lines, rings, trees, and stars. On complete networks, we fully characterize the equilibrium structure, showing that conformists coordinate on the strictly most common strategy while rebels distribute themselves as evenly as possible across strategies. For lines and rings with at least three strategies, a PNE exists if and only if every conformist has at least one conformist or stubborn neighbor; the same condition extends to finite trees when the number of strategies exceeds the maximum degree. For star networks, we derive a sharper characterization based on the type of central player and the composition of leaf nodes, which applies even when the number of strategies does not exceed the maximum degree.

Overall, our results show how network topology and the distribution of types jointly determine equilibrium outcomes.  Conformists need sufficient local support to coordinate, while rebels need strategic flexibility to avoid matching. A PNE emerges when these opposing forces can be simultaneously accommodated, which requires every conformist to have enough conformist or stubborn neighbors, and fails when numerous conformist-rebel edges create unavoidable local conflicts. Our analysis thus provides a systematic framework for studying coordination and conflict in heterogeneous populations with multiple strategies.

The remainder of the paper is organized as follows. Section 2 introduces the $m$-strategy CRS game models. Section 3 analyzes the existence of PNE for general network structures. Section 4 characterizes the equilibrium structure on several specific network structures. Section 5 concludes and discusses some extensions of this study.

\section{Model}
In this paper, we consider games played on fixed networks with three types of players, namely conformists, rebels, and stubborn agents (CRS game for short). A conformist prefers to take the most common strategy in his/her neighbors, a rebel prefers to take the least common strategy, and a stubborn agent never changes his/her strategy. Formally, a CRS game can be represented by a system $G = (C, R, S, E, A, V)$, where $C$, $R$, and $S$ are the sets of conformists, rebels, and stubborn agents, respectively, $E$ is the set of undirected edges, $A=\left\{a_1, \ldots, a_m\right\}$ is the set of strategies, and $U$ is the set of payoff functions.

\par \textbullet \ 
$N=C \cup R \cup S$ is the set of players, also referred to as nodes of a network. Suppose that there are $n$ players in total, and the numbers of conformists, rebels, and stubborn agents with strategy $a_i$ are $c$, $r$, and $s_i$, respectively. Thus, the sets for the three types of players can be represented as $C=\{1, \ldots, c\}, R=\{c+1, \ldots, \mathrm{c}+r\}$, and $S=\{c+r+1, \ldots, c+r+s_1, \ldots,  c+r+s\}$, where $s=\sum_{i=1}^m s_i$ and $c+r+s=n$.

\par \textbullet \ 
$E \subseteq N \times N$ is the set of undirected edges. We assume that $E \neq \emptyset$ and $i i \notin E$ for all $i \in N$. Player $i \in N$ can observe the strategy of player $j \in N$ (i.e., $j$ is a neighbor of $i$) if and only if $i j \in E$. For all $i \in N$, we use $N_i(G)=\{j \mid i j \in E\}$ to denote the set of $i$'s neighbors in $G$ and $d_i:=\left|N_i\right|$ to denote the number of $i$'s neighbors. 

\par \textbullet \ 
$A=\left\{a_1, \ldots, a_m\right\}$ is the strategy set for each player. We use $A^N$ to denote the set of pure strategy profiles. Given an strategy profile $\boldsymbol{x}=\left(x_1, \ldots, x_c, x_{c+1}, \ldots, x_{c+r}, x_{c+r+1}, \ldots, x_n\right) \in A^N$, $x_i$ denotes the strategy of player $i$, and $\boldsymbol{x}_{-i}$ denotes the profile of the strategies of players other than $i$. In addition, we do not consider mixed strategies throughout this paper.

\par \textbullet \ 
$U=\left(u_i(\cdot)\right)_{i \in N}$ is the collection of payoff functions. Following Cao et al. (2024) and Cao et al. (2026),  payoff functions for the three types of player can be written as
\begin{equation}
u_i(\boldsymbol{x})=\left\{\begin{array}{cl}
\left|\left\{j \in N_i: x_j=x_i\right\}\right|, & \text { if } i \in C \\
\left|\left\{j \in N_i: x_j \neq x_i\right\}\right|, & \text { if } i \in R \\
0, & \text { if } i \in S
\end{array}\right.  \ \ \ .
\label{eq1}
\end{equation}
Thus, the payoff of a conformist is taken as the number of neighbors using the same strategy as him/her, and the payoff of a rebel is taken as the number of neighbors using different strategies. 

Following the notations above, a pure strategy Nash equilibrium (PNE) of the CRS game is defined as follows.

\begin{definition}
A strategy profile $\boldsymbol{x}^* \in A^N$ is a PNE if and only if $u_i\left(\boldsymbol{x}^*\right) \geq u_i\left(x_i, \boldsymbol{x}_{-i}{ }^*\right)$ for $\forall i \in N$ and $\forall x_i \in A$, where $u_i(\boldsymbol{x})$ is defined in Eq.(\ref{eq1}).
\label{def1}
\end{definition}

From Definition \ref{def1}, at a PNE $\boldsymbol{x}^*$, $x_i^*$ is (one of) the most common strategy among the neighbors of player $i$ for all $i \in C$, and $x_j^*$ is (one of) the least common strategy among the neighbors of player $j$ for all $j \in R$.

\section{Equilibrium analysis: General network structures}

\subsection{Benchmark cases}

Previous studies have focused on the existence of PNE in network games with two types of players, such as games combining conformists with stubborn agents or with rebels. Some of these results can be applied or extended to our model. In this section, we briefly summarize these results.

\begin{result}
For any network, the $m$-strategy game with conformists and stubborn agents (CS game) has at least one PNE.
\label{re1}
\end{result}

This result was introduced in \citet{Cao2024}, showing that the $m$-strategy CS game is an exact potential game \citep{Monderer1996}. A similar result can be established for the $m$-strategy game with rebels and stubborn agents (CS game) using the potential game method.

\begin{result}
For any network, the $m$-strategy game with rebels and stubborn agents (RS game) has at least one PNE.
\label{re2}
\end{result}

\begin{proof}
For an RS game, we define the potential function $\phi:A^N\to\mathbb{R}$ as follows:
$$
\phi(\boldsymbol{x}):=\frac{1}{2}(\sum_{k \in R}\left|\left\{j \in N_k: x_j \neq x_k\right\}\right|+\sum_{k \in R}\left|\left\{j \in N_k \cap S: x_j \neq x_k\right\}\right|).
$$
Since stubborn agents do not change their strategies, we only need to focus on the changes of strategies and utilities of rebels. For $\forall i \in R$, $\forall \boldsymbol{x}_{-i} \in A^N_{-i}$, and $\forall x_i, x_i^{\prime} \in A$, it is easy to check 
$$
\begin{aligned}
&\phi\left(x_i, \boldsymbol{x}_{-i}\right)-\phi\left(x_i^{\prime}, \boldsymbol{x}_{-i}\right) \\
=&\left|\left\{j \in N_i: x_j \neq x_i\right\}\right|-\left|\left\{j \in N_i: x_j \neq x_i^{\prime}\right\}\right| \\
=&u_i\left(x_i, \boldsymbol{x}_{-i}\right)-u_i\left(x_i', \boldsymbol{x}_{-i}\right).
\end{aligned}
$$
This means that the RS game is an exact potential game and admits at least one PNE.
\end{proof}

The situation of games with conformists and rebels (CR game) is much more complicated. A CR game may not have any PNE, as illustrated by the two-player matching pennies game. \citet{Zhang2018} analyzed the properties of PNE for the 2-strategy CR game (called the fashion game in their paper). They 
show that a PNE exists if the network exhibits strong conformist homophily or strong rebel homophily, i.e., every conformist has sufficiently many conformist neighbors, or every rebel has sufficiently many rebel neighbors. Conversely, on bipartite networks (which contain no CC or RR edges), a PNE exists only if every player has even degree. Here we extend the result of bipartite networks to the $m$-strategy setting.

\begin{result}
On a complete bipartite network, an $m$-strategy CR game admits a PNE if and only if $m \mid c$ and $m \mid r$. In addition, at any PNE, conformists and rebels are each evenly distributed across the $m$ strategies.
\label{re3}
\end{result}

\begin{proof}
For any strategy profile $\mathbf{x}$, let $n_a^C(\mathbf{x})$ and $n_a^R(\mathbf{x})$ be the numbers of conformists and rebels choosing strategy $a$. A conformist playing $a$ receives payoff $n_a^R(\mathbf{x})$ (matching all rebels with $a$), while a rebel playing $a$ receives $c - n_a^C(\mathbf{x})$ (matching all conformists not choosing $a$).

First, at a PNE every strategy is chosen by at least one conformist. If strategy $a$ had no conformist and strategy $b$ had at least one, then rebels would strictly prefer $a$ to $b$, so $n_a^R > 0$ and $n_b^R = 0$. The conformist choosing $b$ would earn zero and could profitably switch to $a$, a contradiction. A symmetric argument shows that every strategy is chosen by at least one rebel.

Second, conformists must be equally divided among strategies in equilibrium. If $n_a^C > n_b^C$, then rebels strictly prefer $b$ to $a$, giving $n_b^R > 0$ and $n_a^R = 0$. A conformist at $a$ would earn zero and could gain by deviating to $b$, contradicting equilibrium. Hence, $n_a^C = n_b^C$ for all $a,b$, so $m \mid c$. A symmetric argument using rebel payoffs yields $m \mid r$ and equal distribution of rebels. 
\end{proof}

\subsection{General cases}

In this subsection, we assume that the $m$-strategy CRS game includes all three types of player, conformists, rebels, and stubborn agents, and we study the properties of PNE for this game on arbitrary networks. 

\begin{theorem}\label{the1}
The $m$-strategy networked CRS game admits a PNE if one of the following conditions holds: (i) the game does not have conformist-rebel (CR) edges;
(ii) there exists a strategy $a\in A$ such that each conformist has at least $\frac{1}{2}$ of his/her neighbors in the set $C \cup S_a$. 
\end{theorem}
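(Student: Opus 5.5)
For part (i), I will write down an exact potential for the game and conclude by the finite improvement property, as in Result \ref{re1} and Result \ref{re2}. With no CR edges, every neighbor of a conformist lies in $C\cup S$ and every neighbor of a rebel lies in $R\cup S$. Set
$$
\phi(\boldsymbol{x}) := \sum_{\{k,j\}\in E,\ k,j\in C} \mathbf{1}[x_k=x_j] + \sum_{k\in C,\ j\in N_k\cap S}\mathbf{1}[x_k=x_j] + \sum_{\{k,j\}\in E,\ k,j\in R} \mathbf{1}[x_k\neq x_j] + \sum_{k\in R,\ j\in N_k\cap S}\mathbf{1}[x_k\neq x_j].
$$
Each undirected edge is counted once here; this is the unhalved form of the potential in Result \ref{re2}.

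I will check that a unilateral deviation by any $i\in C$ changes only the terms involving $i$. Those terms are exactly the edges from $i$ to $C\cup S$, which is all of $N_i$, so the change in $\phi$ equals the change in $u_i$. The same argument works for $i\in R$. Stubborn agents never move. Hence the game is an exact potential game on the finite set $A^N$, and a maximizer of $\phi$ is a PNE.

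For part (ii), I will construct the equilibrium directly instead of using a potential, since CR edges break the potential structure. Fix the strategy $a$ from the hypothesis and let all conformists play $a$. Stubborn agents keep their fixed strategies. It remains to choose strategies for the rebels.

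Freeze the conformists at $a$ and consider the induced game among rebels, treating conformists as extra stubborn agents. This is an RS game, so by Result \ref{re2} it has a PNE $\boldsymbol{x}_R^*$. In the resulting profile, no rebel wants to deviate, because a rebel's best response depends only on its neighbors, and the non-rebels are exactly as in the induced game.

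Now consider a conformist $i$. The number of neighbors playing $a$ is at least $|N_i\cap(C\cup S_a)|\ge d_i/2$. For any $b\neq a$, the neighbors playing $b$ lie in $N_i\setminus (C\cup S_a)$, so there are at most $d_i - d_i/2 = d_i/2$ of them. Hence $u_i(a,\boldsymbol{x}_{-i}^*)\ge u_i(b,\boldsymbol{x}_{-i}^*)$ for every $b$, and $\boldsymbol{x}^*$ is a PNE.

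The main obstacle is the equality case in (ii): rebels might cluster on a single other strategy $b$ and tie with $a$. Since the definition of PNE uses weak inequality, a tie is still a best response, so the bound above is enough. The only step I need to verify carefully is that Result \ref{re2} applies when some "stubborn" nodes are the frozen conformists. It does, because that proof never uses the identities of the stubborn agents, only that they do not move.
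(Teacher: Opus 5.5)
Your proposal is correct and follows essentially the paper's argument. In (i), your single $\phi$ is just the sum of the CS and RS potentials that the paper invokes separately after splitting the game along the absent CR edges; in (ii), you freeze all conformists at $a$ and reduce to an RS game exactly as the paper does, with your explicit count (at least $d_i/2$ neighbors on $a$ versus at most $d_i/2$ on any $b\neq a$, whatever the rebels play) supplying the step the paper only asserts.
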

\begin{proof}
We verify the existence of a PNE under each condition. 

For condition (i), a CRS game without CR edges can be decomposed into a CS game and an RS game. Result 1 and Result 2 show that both the CS and RS games are exact potential games and therefore possess a PNE.

For condition (ii), assigning strategy $a$ to all conformists guarantees that no conformist has the incentive to deviate. Once their strategies are fixed in this way, the game is reduced to an RS game and admits a PNE.
\end{proof}

The above theorem provides two sufficient conditions for the existence of PNE in $m$-strategy CRS games. These conditions are a direct extension for Theorem 2 in \citet{Cao2026}, which provides sufficient conditions for the existence of PNE for $2$-strategy CRS games.

As for the question of nonexistence of a PNE, to the best of our knowledge there is no general method to determine this for a CRS game on an arbitrarily given network. We first identify a simple but critical local structure whose existence precludes the game from admitting a PNE.

\begin{definition}
\label{def:M-star-degree}
A pair of nodes $(u,v)$ is said to form a local structure $M^*$ if there exist integers $k_u,k_v\ge 0$ such that node $u \in C$ has one rebel neighbor $v$ and $mk_u$ stubborn neighbors, where exactly $k_u$ of whom adopt each strategy $a\in A$, and likewise, node $v \in R$ has one conformist neighbor $u$ and $mk_v$ stubborn neighbors, where exactly $k_v$ of whom adopt each strategy $a\in A$.
\end{definition}

\begin{lemma}\label{lem1}
If a CRS game has (at least) a pair of nodes $(u,v)$ satisfying the local structure $M^*$, then the game does not admit a PNE. 
\end{lemma}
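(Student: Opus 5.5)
We argue by contradiction: suppose $\boldsymbol{x}^*$ is a PNE, and look only at the best-response conditions of $u$ and $v$.

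First, consider the conformist $u$. Its neighbors are the rebel $v$ plus $mk_u$ stubborn agents, $k_u$ on each strategy. For a candidate strategy $a\in A$, the payoff of $u$ is
\[
u_u(a,\boldsymbol{x}^*_{-u}) = k_u + \mathbf{1}[x^*_v = a].
\]
This is uniquely maximized at $a = x^*_v$, where it equals $k_u+1$; every other strategy gives only $k_u$. So at a PNE we must have $x^*_u = x^*_v$.

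Next, consider the rebel $v$. Its neighbors are the conformist $u$ plus $mk_v$ stubborn agents, $k_v$ on each strategy. Counting how many neighbors share $v$'s strategy $a$ gives
\[
|\{j\in N_v: x_j^*=a\}| = k_v + \mathbf{1}[x^*_u = a].
\]
The payoff of $v$ is $d_v$ minus this count. It is uniquely maximized when $a \neq x^*_u$, giving $d_v - k_v$. This requires $m\ge 2$, which holds for any nontrivial strategy set; if $m=1$ the lemma fails trivially, so we assume $m\ge2$. Choosing $a = x^*_u$ gives only $d_v-k_v-1$. Hence at a PNE we need $x^*_v\neq x^*_u$.

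The two conditions contradict each other, so no PNE exists. The argument does not depend on the other players, since the neighborhoods of $u$ and $v$ are fully specified by $M^*$. The only delicate point is the strictness of both preferences. Here the balanced stubborn neighborhoods, with exactly $k$ agents on each strategy, matter: they make $u$'s and $v$'s comparisons depend only on each other, so neither has a tie-breaking escape. This is the crux to state carefully, including the assumption $m\ge 2$.
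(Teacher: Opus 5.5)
Your proof is correct and follows essentially the same argument as the paper. Both show that the balanced stubborn neighborhoods make the conformist $u$ strictly prefer $x_u^*=x_v^*$ and the rebel $v$ strictly prefer $x_v^*\neq x_u^*$, which is a contradiction. Your explicit remark that $m\ge 2$ is needed is a detail the paper's proof leaves implicit.
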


\begin{proof}
We prove the result by contradiction. Suppose that the CRS game has a pair of nodes $(u,v)$ satisfying the local structure $M^*$. By definition, node $u$ is a conformist and node $v$ is a rebel. Furthermore, except for the edge connecting $u$ and $v$, all neighbors of $u$ and $v$ are stubborn players uniformly distributed over the strategy set $A$. 

Suppose that the game has a PNE, denoted by $\mathbf{x}^*$. If $x_u^*\neq x_v^*$, then the payoff of node $u$ is $u_u(\boldsymbol{x}^*)=k_u$, and its payoff can increase by 1 by switching to strategy $x_v^*$. On the other hand, if $x_u^*=x_v^*$, then the payoff of node $v$ is $u_v(\boldsymbol{x}^*)=(m-1)k_v$, and its payoff can increase by 1 by switching to any different strategy. Thus, it is impossible for both $u$ and $v$ to play the best responses simultaneously, and therefore $\mathbf{x}^*$ cannot be a PNE.
\end{proof}

We next prove the non-existence of PNE in large random networks by showing that the probability of lacking a local structure $M^*$ vanishes as the network size grows.


\begin{theorem}\label{the2}
Consider a network CRS game on a random graph with $n$ nodes, degree distribution $\mathbf{d}$, and type distribution $\mathbf{\pi}=(\pi_C,\pi_R,\pi_{S_1},\dots,\pi_{S_m})$. We assume
(i) $\sup_n \frac1n\sum_i d_i^2 < \infty$,
(ii) there exist constants $\alpha>0$ and $K>0$ such that the induced subgraph on $V_n^*=\{i: d_i\equiv1\pmod m,\ d_i\le K\}$ contains at least $\alpha n$ edges,
(iii) player types are drawn independently according to $\mathbf{\pi}$ and every type has positive probability.
Then, the game almost surely does not admit a PNE as $n\to \infty$.
\end{theorem}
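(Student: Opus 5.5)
The approach is to show that, with probability tending to one, the random CRS game contains a pair $(u,v)$ forming the local structure $M^*$ of Definition~\ref{def:M-star-degree}. Lemma~\ref{lem1} then rules out any PNE. The degrees are fixed by $\mathbf{d}$; the randomness comes from the independent type assignment in (iii), together with the random graph model (a configuration-type model with prescribed degrees).

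\textbf{Step 1: candidate edges.} By (ii), the induced subgraph on $V_n^*$ has at least $\alpha n$ edges. For any such edge $uv$ we have $d_u = mk_u+1$ and $d_v = mk_v+1$ with $k_u,k_v \le K/m$. The pair $(u,v)$ realizes $M^*$ if four things hold: $u\in C$; $v\in R$; the other $mk_u$ neighbours of $u$ are stubborn with exactly $k_u$ on each strategy; and the same holds for the $mk_v$ other neighbours of $v$. The one degenerate case is a common neighbour of $u$ and $v$, which would need a consistent labelling. If the two neighbourhoods are disjoint, these events involve at most $2K$ nodes, and by independence their probability is at least
\begin{equation*}
p_* = \pi_C\pi_R \min_{k\le K/m}\Bigl(\tfrac{(mk)!}{(k!)^m}\prod_{a}\pi_{S_a}^{k}\Bigr)^{2} > 0,
\end{equation*}
a constant independent of $n$.

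\textbf{Step 2: many nearly independent trials.} I would greedily extract a subfamily of candidate edges whose closed neighbourhoods $N[u]\cup N[v]$ are pairwise disjoint. Each such ball has at most $2K+2$ nodes. A ball meets only boundedly many other candidate balls, at most roughly $(2K+2)K^2$ since degrees in $V_n^*$ are at most $K$. So a greedy selection keeps a linear number $\beta n$ of edges, with $\beta = \alpha/(\text{const}(K))$. Condition (i) is used to control the random graph. Uniformly bounded second moment of degrees implies the expected number of short cycles (triangles and $4$-cycles) is $O(1)$. Hence all but $O(1)$ (w.h.p.\ $o(n)$) candidate edges have $N(u)\setminus\{v\}$ and $N(v)\setminus\{u\}$ disjoint. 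This removes the degenerate case, so the $p_*$ bound applies to $\beta n - o(n)$ edges.

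\textbf{Step 3: conclusion.} Types are drawn independently, and the selected balls are disjoint. So, conditional on the graph, the events ``edge $e$ realizes $M^*$'' are independent across selected edges. The probability that none occurs is at most $(1-p_*)^{\beta n - o(n)} \to 0$; the decay is exponential, so Borel--Cantelli gives the almost sure statement along $n$. Lemma~\ref{lem1} finishes the proof.

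The main obstacle is Step 2. The difficulty is using (i) to guarantee that the local neighbourhoods are tree-like and disjoint enough, which depends on the precise random graph model. I would first assume the configuration model and use the standard bound that the expected number of cycles of length $\ell$ is at most $\nu^\ell/(2\ell)$, where $\nu = \sum d_i(d_i-1)/\sum d_i$ is bounded by (i).
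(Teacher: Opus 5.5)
Your overall plan (show that an $M^*$ pair exists with probability tending to one, then apply Lemma~\ref{lem1}) is right, but Step~2 contains a false claim. A ball $N[u]\cup N[v]$ does not meet only $O(K^3)$ other candidate balls. Only $u$ and $v$ are known to lie in $V_n^*$. Their remaining neighbours may have degrees growing with $n$ while (i) still holds; for instance, $n^{1/3}$ hubs of degree $n^{1/3}$ add only $n$ to $\sum_i d_i^2$. A vertex $w$ lies in the balls of up to $N_w\le K(d_w+1)$ candidate edges. So the number of balls meeting a given one is of order $\sum_{w\in N[u]\cup N[v]}K(d_w+1)$, which is not uniformly bounded, and your greedy extraction is unjustified as written.

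This is exactly where hypothesis (i) is needed, rather than for short cycles. The number of ordered pairs of candidate edges with intersecting balls is at most $\sum_w N_w^2\le K^2\sum_w(d_w+1)^2=O(n)$. Hence the conflict graph on the $\ge\alpha n$ candidate edges has bounded \emph{average} degree, and Caro--Wei (or Tur\'an) yields $\Omega(n)$ pairwise disjoint balls. With that repair your route is a valid alternative to the paper's. The paper skips the extraction and applies Chebyshev to the total count $Y_n$ of $M^*$ pairs, using the same bound $\sum_w N_w^2=O(n)$ to get $\operatorname{Var}(Y_n)=O(n)$ and $\mathbb{P}(Y_n=0)=O(1/n)$. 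Your version uses exact independence over disjoint balls and gives the stronger bound $(1-\rho)^{\Omega(n)}$.

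Second, the common-neighbour case needs no random-graph input, and assuming a configuration model only narrows the theorem. Conditions (i) and (ii) concern the realized graph, and the argument should work for any graph satisfying them with only the types random. Suppose $u$ and $v$ share neighbours; there are at most $m\min\{k_u,k_v\}$ of them. Give them stubborn labels with at most $\min\{k_u,k_v\}$ per strategy. Then label the remaining (disjoint) private neighbours of $u$ and of $v$ so that each strategy appears exactly $k_u$ times among $u$'s neighbours and $k_v$ times among $v$'s. This fixed assignment has probability at least $\rho=\pi_C\pi_R\pi_{\min}^{2K}$, with $\pi_{\min}$ the smallest type probability, so every candidate edge counts.

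Your triangle-removal step would discard everything on, e.g., disjoint copies of $K_4$ with $m=2$. There (i) and (ii) hold, yet every candidate edge lies in a triangle, even though $M^*$ is still realizable. Also, on your route the triangle count only gives failure probability $O(1/n)$ via Markov, so the Borel--Cantelli remark is not justified there. After the direct-labelling fix it would be (given a coupling across $n$), although the statement only requires probability tending to one.
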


\begin{proof}
Write $\pi_{\min}= \min\{\pi_C,\pi_R,\pi_{S_1},\dots,\pi_{S_m}\} >0$. Let $E_n$ be the set of edges whose both endpoints lie in $V_n^*$. By (ii), $|E_n|\ge \alpha n$ for large $n$. For $(u,v)\in E_n$, set $k_u=(d_u-1)/m$ and $k_v=(d_v-1)/m$, which are integers by definition of $V_n^*$. Define the event $A_{(u,v)}$ that
(a) $u$ is a conformist and $v$ is a rebel, (b) among the $m k_u$ neighbors of $u$ other than $v$, each stubborn type appears exactly $k_u$ times, (c) among the $m k_v$ neighbors of $v$ other than $u$, each stubborn type appears exactly $k_v$ times.
Let $Y_{(u,v)}=\mathbf 1_{A_{(u,v)}}$ and $Y_n=\sum_{(u,v)\in E_n}Y_{(u,v)}$.

Let $S_{(u,v)}=\{u,v\}\cup N(u)\cup N(v)$. Because $d_u,d_v\le K$, $|S_{(u,v)}|\le 2K+2$. A type assignment on $S_{(u,v)}$ that fulfills $A_{(u,v)}$ exists (common neighbours can be assigned types first without exceeding $\min\{k_u,k_v\}$ of each, and remaining neighbours filled to reach exactly $k_u,k_v$). Under independent assignment, the probability of this particular configuration is at least $\pi_C\pi_R\,\pi_{\min}^{|S_{(u,v)}\setminus\{u,v\}|}\ge \pi_C\pi_R\,\pi_{\min}^{2K}:=\rho$. Hence $\mathbb{E}[Y_{(u,v)}]\ge\rho$ and $\mathbb{E}[Y_n]\ge \rho|E_n|\ge \rho\alpha n$.

The random variable $Y_{(u,v)}$ depends only on types in $S_{(u,v)}$. Thus if $S_{(u,v)}\cap S_{(u',v')}=\varnothing$, the indicators are independent. Only pairs with overlapping neighborhoods contribute to the covariance. For a vertex $w$, let $N_w$ be the number of edges $(u,v)\in E_n$ with $w\in S_{(u,v)}$. Since $u,v\in V_n^*$ have degree at most $K$, $N_w\le \sum_{z\in N(w)\cap V_n^*}d_z + d_w\mathbf 1_{\{w\in V_n^*\}} \le K(d_w+1)$. Consequently, the number of ordered pairs of edges with overlapping neighborhoods is at most
\[
\sum_w N_w^2 \le K^2\sum_w (d_w+1)^2 \le 2K^2\sum_w (d_w^2+1) = O(n),
\]
using the bounded second‑moment assumption. As $|\operatorname{Cov}(Y_{(u,v)},Y_{(u',v')})|\le1$, we obtain $\operatorname{Var}(Y_n)=O(n)$.

Chebyshev's inequality yields
$$
\mathbb{P}(Y_n=0)\le \mathbb{P}\bigl(|Y_n-\mathbb{E}[Y_n]|\ge \mathbb{E}[Y_n]\bigr)
\le \frac{\operatorname{Var}(Y_n)}{(\mathbb{E}[Y_n])^2} \le \frac{C' n}{(\rho\alpha n)^2}=O(n^{-1})\to 0.
$$
Thus, $\mathbb{P}(Y_n\ge 1)\to 1$. By Lemma~\ref{lem1}, the existence of such a local configuration precludes a PNE, so the game almost surely admits no PNE.

\end{proof}

Theorem \ref{the2} implies that PNE is almost surely absent in large random networks with a broad class of degree distributions, including Poisson, normal, scale-free, and exponential distributions, provided that the induced subgraph on $V_n^*$ maintains a sufficient number of edges so that the local structure $M^*$ almost surely exists as the network grows. Numerical simulations on Erd\H{o}s--R\'enyi random networks, whose degree distributions are Poisson for large $n$, confirm that the probability of PNE existence decreases with the number of players for different average degrees (Figure 1).

\begin{figure}[htbp]
    \centering
    \includegraphics[width=0.6\textwidth]{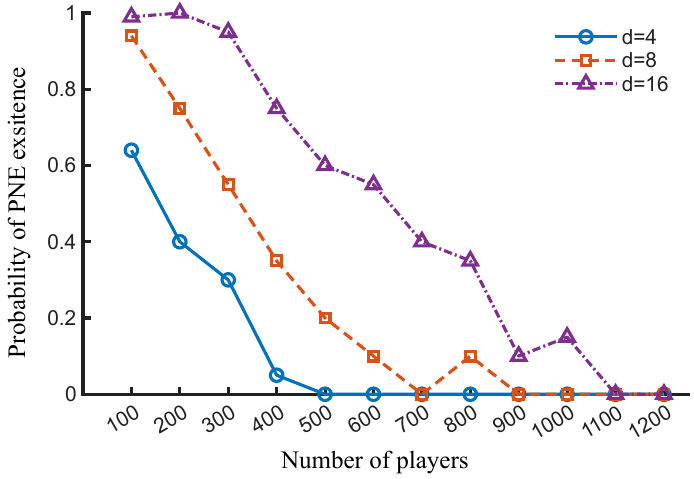}
    \caption{Estimated probability of PNE existence as the number of players varies across different average degrees. For each $(n,d)$, $20$ independent 3-strategy CRS game instances are generated on Erd\H{o}s--R\'enyi random networks, with $n=100,200,\ldots,1200$ and $d=4,8,$ or $16$. Player types are independently assigned with probabilities $\pi_C=\pi_R=0.25$ and $\pi_{S_k}=\frac{1}{6}$ for $k=1,2,3$. For each instance, asynchronous best-response dynamics are run from $10$ random initial profiles for at most $5000$ updates per run. The probability is the fraction of independently generated game instances for which a PNE is detected from at least one of the 10 random initial profiles.}
\end{figure}

\section{Equilibrium analysis: Special network structures}

In this section, we consider five special network structures, namely the complete network, line, ring, tree, and star. For these networks, we can provide sufficient and necessary conditions for the existence of a PNE and characterize the frequencies of strategies at a PNE.

\subsection{Complete network}

For a complete network, the payoff of a player is affected by the strategies of all other players. We denote the total number of players choosing strategy $a_k$ by $\chi_k:= |\{v \in N: x_v = a_k\}|$, and denote the total number of stubborn players choosing strategy $a_k$ by $s_k$. Without loss of generality, we assume $s_1 \leq s_2 \leq \cdots \leq s_m$, and let $\lceil x\rceil$ denote the ceiling function. The following lemma characterizes the strategies of conformists and rebels at a PNE.

\begin{lemma}\label{lem2}
At a PNE of an $m$-strategy CRS game on a complete network, \\
(i) all conformists choose the strictly most common strategy, i.e., they adopt strategy $a_i$ if and only if $\chi_i>\chi_j$ for all $j\neq i$, \\
(ii) the numbers of rebels choosing any two strategies differ by at most one, i.e., if rebels adopt both $a_i$ and $a_j$, then $|\chi_i-\chi_j|\leq 1$.
\end{lemma}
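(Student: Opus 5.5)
The proof is a direct best-response computation. In a complete network every player's neighbourhood is everyone else, so the payoff of a player depends only on the aggregate counts $\chi_1,\dots,\chi_m$ and on the player's own strategy. I will write down, for a conformist and for a rebel currently playing $a_i$, the payoff of staying and the payoff of deviating to any $a_j\neq a_i$. I will then read off the equilibrium inequalities from Definition~\ref{def1}.

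For a conformist $v$ with $x_v=a_i$, the neighbours playing $a_i$ number $\chi_i-1$, because $v$ itself is excluded. So $u_v(\boldsymbol{x})=\chi_i-1$. If $v$ switches to $a_j$, the count $\chi_j$ of others playing $a_j$ is unchanged, since $v$ was not among them, and the new payoff is $\chi_j$. The PNE condition is therefore $\chi_i-1\ge \chi_j$, i.e. $\chi_i>\chi_j$, for all $j\neq i$. This gives the "only if" part of (i): any strategy used by a conformist is the strictly most common one. For the converse, note that a strictly most common strategy is unique when it exists. Since every conformist must play a strictly most common strategy, all conformists play the same strategy $a_i$, namely the one with $\chi_i>\chi_j$ for all $j\ne i$. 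This holds whenever $C\neq\emptyset$; if $C=\emptyset$ the statement is vacuous.

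For a rebel $w$ with $x_w=a_i$, the payoff is $(n-1)-(\chi_i-1)=n-\chi_i$. Deviating to $a_j$ yields $(n-1)-\chi_j$. Hence equilibrium requires $\chi_i\le \chi_j+1$ for every $j\ne i$. If rebels use both $a_i$ and $a_j$, applying this inequality to a rebel at $a_i$ and to a rebel at $a_j$ gives $\chi_i\le\chi_j+1$ and $\chi_j\le\chi_i+1$, that is, $|\chi_i-\chi_j|\le 1$, which is (ii).

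The only delicate point is the self-exclusion bookkeeping: one must not count the deviating player in the count of its old strategy, and must not count it in the count of its new strategy before the switch. This bookkeeping is exactly what produces the strict inequality in (i) and the slack of one in (ii). Beyond this, no earlier result is needed.
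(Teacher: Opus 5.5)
Your proof is correct and complete. The paper states Lemma~\ref{lem2} without proof, and your direct best-response computation is exactly the counting it tacitly relies on, as in its proof of Result~\ref{re3}: conformist payoff $\chi_i-1$ versus $\chi_j$, rebel payoff $n-\chi_i$ versus $n-1-\chi_j$, with the deviator correctly excluded from both counts.

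One addition is worth stating explicitly. Your rebel inequality $\chi_i\le\chi_j+1$ holds against \emph{every} $a_j$, not only strategies used by other rebels. The necessity part of Theorem~\ref{the3} actually needs this stronger form rather than (ii) as literally stated.
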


From Lemma \ref{lem2}, we derive necessary and sufficient conditions for the existence of a PNE in a $m$-strategy CRS game on a complete network.


\begin{theorem}\label{the3}
The $m$-strategy CRS game on complete network admits a PNE if and only if one of the following conditions holds:\\ 
(i) $s_m+c>\left\lceil\left(n-c-s_m\right) /(m-1)\right\rceil$, \\
(ii) $s_m+c \leq\left\lceil\left(n-c-s_m\right) /(m-1)\right\rceil$ and $n \equiv 1 \pmod{m}$.
\end{theorem}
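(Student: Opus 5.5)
The plan is to use Lemma~\ref{lem2} to pin down what any PNE must look like, and then handle sufficiency and necessity separately. Throughout I use the best-response inequalities on a complete network. A conformist playing $a_i$ has payoff $\chi_i-1$, and deviating to $a_j$ yields $\chi_j$, so he needs $\chi_i-1\ge \chi_j$. A rebel playing $a_j$ has payoff $n-\chi_j$, and deviating to $a_k$ yields $n-1-\chi_k$, so he needs $\chi_j\le \chi_k+1$ for all $k$. Note also that $\lceil (n-c-s_m)/(m-1)\rceil$ is the maximal level obtained when the $n-c-s_m$ non-conformist, non-$S_m$ players are spread over the other $m-1$ strategies as evenly as possible.

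\textbf{Sufficiency of (i).} Put all conformists on $a_m$, so $\chi_m=s_m+c$. Assign the rebels to $a_1,\dots,a_{m-1}$ by water-filling: each rebel goes to a currently least-populated strategy among these $m-1$. Every strategy used by a rebel then has count at most the minimum of $\chi_1,\dots,\chi_{m-1}$ plus one. Hence every such count is at most $\max\{s_{m-1},\lceil (n-c-s_m)/(m-1)\rceil\}$, which is $<s_m+c$ by (i) and $s_{m-1}\le s_m$. So $a_m$ is strictly the most common strategy, and the conformist condition $\chi_m-1\ge\chi_j$ holds. Among $a_1,\dots,a_{m-1}$ the rebel condition holds by the water-filling property, and a move to $a_m$ is never profitable because $\chi_m$ exceeds every other count.

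\textbf{Sufficiency of (ii).} Write $n=mq+1$. I aim for the count vector $(q,\dots,q,q+1)$, with $a_m$ the $(q+1)$-strategy carrying all conformists. The checks above show that such a profile is a PNE. Conformists get $q$ and can only get $q$ elsewhere. Rebels on a $q$-strategy lose by moving, and rebels on $a_m$ are indifferent. Feasibility needs $s_m+c\le q+1$ and $s_k\le q$ for $k<m$, with rebels filling the gaps. I expect the failure of (i) to give exactly these bounds, using $\lceil(n-c-s_m)/(m-1)\rceil\ge s_{m-1}$. When $c=0$ the same vector works with only rebels.

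\textbf{Necessity, the main obstacle.} Suppose a PNE exists and (i) fails; I must show $n\equiv1\pmod m$. First, if $c>0$, Lemma~\ref{lem2}(i) gives a strict maximum $a_i$ holding all conformists. There are two cases. If some rebel plays $a_i$, the rebel inequality together with strict maximality forces $\chi_i=\chi_k+1$ for all $k\ne i$, so $n=m\chi_k+1$. If no rebel plays $a_i$, then $\chi_i=s_i+c\le s_m+c$. The rebels sit on the other strategies and, by Lemma~\ref{lem2}(ii) and the rebel inequality, spread as evenly as possible. So the largest other count is at least $\lceil (n-s_i-c)/(m-1)\rceil\ge\lceil (n-c-s_m)/(m-1)\rceil\ge s_m+c\ge\chi_i$, contradicting strict maximality. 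The bookkeeping in this second case is where care is needed, in particular the comparison when $i\ne m$ and when stubborn levels exceed the rebel level. If $c=0$, condition (i) failing means $s_m\le\lceil (n-s_m)/(m-1)\rceil$, and I would show that the rebels' balancing forces the counts to be nearly equal. However, pure rebel balancing does not obviously force $n\equiv1\pmod m$, so I would double-check whether the theorem implicitly assumes $c\ge1$, and if so treat $c=0$ as a separate case.
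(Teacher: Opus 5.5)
Your proposal is correct and follows the paper's proof in substance (all conformists on $a_m$ with rebels balanced over $a_1,\dots,a_{m-1}$ for (i), the frequency vector $(q,\dots,q,q+1)$ for (ii), and for necessity the split on whether some rebel shares the conformists' strictly maximal strategy). Your direct use of $\chi_j\le\chi_k+1$ and of the bound $\max\{s_{m-1},\lceil (n-c-s_m)/(m-1)\rceil\}$ is, if anything, more careful than the paper's appeal to Lemma~\ref{lem2}; your open points close as follows:
\begin{itemize}
\item If $s_m+c\ge q+1$, then $\lceil (n-c-s_m)/(m-1)\rceil\le q<s_m+c$. So failure of (i) gives $s_m+c\le q$, and hence every $s_k\le q$.
\item Your second necessity case is already complete by pigeonhole alone.
\item Your $c=0$ suspicion is right. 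An RS game always has a PNE (Result~\ref{re2}), while, e.g., three rebels with $m=3$ satisfy neither (i) nor (ii). So the theorem must be read with $c\ge 1$, as the paper's proof tacitly does when it invokes Lemma~\ref{lem2}(i).
\end{itemize}
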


\begin{proof}
\medskip
\emph{Sufficiency.} Case (i): Following the idea of partial potential analysis \citep{Zhang2018} , we set $x_i=a_m$ for every $i\in C$. Treating these conformists as stubborn agents with strategy $a_m$ reduces the original CRS game $G$ to an RS game $G'$, which is a potential game and therefore admits a PNE $x'^*$. By Lemma \ref{lem2}(ii), at $x'^*$ the rebels evenly distribute themselves across the remaining $m-1$ strategies. Hence, $\chi_m=s_m+c>\left\lceil\left(n-c-s_m\right) /(m-1)\right\rceil\geq\chi_i$ for all $i\neq m$. Lemma \ref{lem2}(i) then implies that conformists have no incentive to deviate, so the strategy profile combining $x_i=a_m$ for all $i\in C$ with $x'^*$ constitutes a PNE of the original game $G$.

Case (ii): We again set $x_i=a_m$  for every $i\in C$. Since $n \equiv 1 \pmod{m}$, write $n=km+1$ for some integer $k \geq 1$. The condition $s_m+c\leq \lceil(n-c-s_m)/(m-1)\rceil$ implies $s_m+c \leq k$. Now consider the strategy profile $x^*$ with $\chi_m=k+1$ and $\chi_i=k$ for all $i\neq m$. Concretely, $k+1-s_m-c$ rebels adopt strategy $a_m$, and for each $i\neq m$, $k-s_i$ rebels adopt strategy $a_i$. By Lemma \ref{lem2}, neither conformists nor rebels have an incentive to deviate at $x^*$, which therefore constitutes a PNE.

\par \ 
\emph{Necessity.} Suppose a PNE exists in a case other than (i) or (ii). By Lemma \ref{lem2}(i), all conformists choose the same strategy $a_v$, and the number of players adopting $a_v$ strictly exceeds that of any other strategy. The condition $s_m+c\leq \lceil(n-c-s_m)/(m-1)\rceil$ implies that even if all conformists pick $a_v$, they cannot make it the most common strategy alone. Thus, at least one rebel must also select $a_v$. From Lemma \ref{lem2}(ii), the numbers of rebels choosing any two strategies differ by at most one. Hence, every other strategy chosen by rebels can have at most one player fewer than $a_v$. For $a_v$ to be strictly the most common, all $m-1$ other strategies must therefore be exactly one player behind $a_v$, yielding $n \equiv 1 \pmod{m}$. This contradicts $n \bmod m \neq 1$. Thus, a PNE can exist only under cases (i) and (ii).
\end{proof}

Theorem \ref{the3} provides necessary and sufficient conditions for PNE existence on complete networks, but PNE are generally non-unique. Multiplicity arises for two reasons. First, under a given equilibrium strategy frequency vector $\mathbf{\chi}=(\chi_1,...,\chi_m)$, rebels can be distributed among strategies in different ways without altering that vector, resulting in distinct equilibrium profiles. Second, the equilibrium frequency vector $\mathbf{\chi}$ itself need not be unique. When multiple strategies attain the same maximal frequency, conformists can coordinate on any of them, producing alternative equilibrium distributions. Consequently, the number of PNE typically grows with population size, and the set of possible equilibrium frequency vectors depends on the number of strategies and numbers of the three player types. Specifically, for the $2$-strategy case, the strategy frequency at a PNE can be completely characterized by the proof of Theorem \ref{the3}.

\begin{corollary}
The $2$-strategy CRS game on complete network admits a PNE if and only if one of the following conditions holds: (i) $\left|s_2-s_1\right|+c>r$, (ii) $\left|s_2-s_1\right|+c \leq r$ and $n$ is odd. If condition (i) holds, then the equilibrium strategy frequency is $\mathbf{\chi}=(r+s_1, c+s_2)$ if $s_2\geq s_1$ and is $\mathbf{\chi}=(c+s_1, r+s_2)$ if $s_2\leq s_1$. If condition (ii) holds, then the equilibrium strategy frequency is $\mathbf{\chi}=(\dfrac{n+1}{2}, \dfrac{n-1}{2})$ or $\mathbf{\chi}=(\dfrac{n-1}{2}, \dfrac{n+1}{2})$.
\end{corollary}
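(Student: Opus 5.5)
We specialize Theorem~\ref{the3} to $m=2$, then use Lemma~\ref{lem2} to pin down the equilibrium frequency vectors.

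\emph{Existence conditions.} For $m=2$ we have $\lceil (n-c-s_m)/(m-1)\rceil = n-c-s_{\max} = r+s_{\min}$, where $s_{\max}=\max\{s_1,s_2\}$ and $s_{\min}=\min\{s_1,s_2\}$. Condition (i) of Theorem~\ref{the3} therefore reads $s_{\max}+c>r+s_{\min}$, which is $|s_2-s_1|+c>r$. Condition (ii) of Theorem~\ref{the3} reads $|s_2-s_1|+c\le r$ together with $n\equiv 1 \pmod 2$, i.e.\ $n$ odd. This gives the existence part directly.

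\emph{Frequencies.} By Lemma~\ref{lem2}(i), all conformists sit on a strictly most common strategy $a_v$; let $a_w$ be the other strategy. A rebel on $a_v$ sees $\chi_v-1$ neighbours on $a_v$ and $\chi_w$ on $a_w$. Its best-response condition is $\chi_v-1\le \chi_w$, and since $\chi_v>\chi_w$ this forces $\chi_v=\chi_w+1$. This in turn forces $n$ odd and $\chi=(\tfrac{n\pm1}{2},\tfrac{n\mp1}{2})$.

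Two cases then arise.
\begin{itemize}
\item If some rebel plays $a_v$, we are in the odd-$n$ balanced vector, which is the description in case (ii). In case (ii) itself, $c+s_{\max}\le r+s_{\min}$ means the conformists alone cannot make any strategy strictly most common. Hence a rebel must join $a_v$, and the vector is exactly one of the two listed.
\item If no rebel plays $a_v$, then all rebels play $a_w$ and $\chi=(c+s_v,\,r+s_w)$ up to ordering, with the requirement $c+s_v>r+s_w$. Rebels on $a_w$ are then automatically best-responding, because $\chi_w-1<\chi_v$. In case (i) this profile is realized with $v$ the index of the larger stubborn count, giving $\chi=(r+s_1,c+s_2)$ when $s_2\ge s_1$, and symmetrically when $s_2\le s_1$. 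This matches the sufficiency construction in the proof of Theorem~\ref{the3}.
\end{itemize}

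\emph{Main obstacle: uniqueness in case (i).} Assume $s_2\ge s_1$. A second profile, with conformists on $a_1$ and all rebels on $a_2$, is ruled out only if $c+s_1\le r+s_2$. I would first check this inequality. If it can fail, i.e.\ $c>r+|s_2-s_1|$, then that profile is also a PNE, with vector $(c+s_1,r+s_2)$. In that event the corollary should be read as giving the equilibrium frequency obtained from the construction in the proof of Theorem~\ref{the3} rather than the only one, and I would state this explicitly. Likewise, the balanced vector can also occur within case (i) when $n$ is odd, and I would flag this alongside.
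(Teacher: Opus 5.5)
Your proposal is correct and follows the paper's route. Existence is Theorem~\ref{the3} at $m=2$, and the frequencies come from Lemma~\ref{lem2} together with the sufficiency construction; this is all the paper's one-line appeal to ``the proof of Theorem~\ref{the3}'' amounts to.

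The obstacle you flag is real, and you can state it unconditionally rather than as an ``if'':
\begin{itemize}
\item For $c=5$, $r=1$, $s_1=0$, $s_2=1$, both $\chi=(1,6)$ and $\chi=(5,2)$ are PNE frequencies (the latter with conformists on $a_1$ and the rebel on $a_2$).
\item For $c=1$, $r=3$, $s_1=0$, $s_2=3$, both $\chi=(3,4)$ and the balanced $\chi=(4,3)$ are PNE frequencies (the latter with the conformist and all rebels on $a_1$).
\end{itemize}
So in case (i) the listed vector is only the frequency of the constructed equilibrium, not the unique one. Your case-(ii) description, by contrast, is genuinely exhaustive.
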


\subsection{Line and ring}

In this section, we examine two closely related network structures, lines and rings. A game $G$ is played on a line if players can be labeled so that $E=\{12, 23, \dots, (n-1)n\}$, and on a ring if $E = \{12, 23, \dots, (n-1)n, n1\}$. We require $n \geq 3$ so that a ring is feasible. In addition, we focus on $m\geq 3$ since the case of 2-strategy has been analyzed in \citet{Cao2019}. The following theorem gives necessary and sufficient conditions for a PNE on these networks.

\begin{theorem}\label{the4}
An $m$-strategy CRS game ($m\geq 3$) on a line or a ring admits a PNE if and only if every conformist has at least one neighbor who is either a conformist or a stubborn agent (i.e., $N_i \cap (C \cup S) \neq \emptyset$ for all $i \in C$).
\end{theorem}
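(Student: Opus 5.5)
The plan is to prove the two directions separately. Both rest on one observation about rebels when $m\ge 3$. On a line or ring every player has degree $d_i\le 2<m$. So for any rebel $v$ and any $\boldsymbol{x}_{-v}$ there is a strategy used by none of $v$'s neighbors, which gives $v$ the maximal payoff $d_v$. Hence at any PNE every rebel differs from all of its neighbors. I will use this property, which I will call (P), in both directions.

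\emph{Necessity.} Suppose $\boldsymbol{x}^*$ is a PNE and some conformist $u$ has only rebel neighbors ($N_u\subseteq R$, $N_u\neq\emptyset$). By (P), every neighbor $v\in N_u$ satisfies $x_v^*\neq x_u^*$, so $u_u(\boldsymbol{x}^*)=0$. But $u$ can switch to $x_v^*$ for any $v\in N_u$ and earn at least $1$, which contradicts equilibrium. This direction is short and works uniformly for endpoints of a line (degree 1) and for interior or ring nodes (degree 2).

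\emph{Sufficiency.} I will build the profile in two stages, in the spirit of the partial-potential argument used in Theorem~\ref{the1} and Theorem~\ref{the3}.

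First, fix the conformists. Decompose $C$ into maximal runs of consecutive conformists along the line or ring. By hypothesis, each run either has length at least $2$ or is a singleton with a stubborn neighbor. For a singleton run, I assign the strategy of an adjacent stubborn agent; if it has two stubborn neighbors sharing a strategy, I choose that common strategy. For a run of length $\ge2$, I assign one strategy to the whole run, namely the strategy of an adjacent stubborn agent if there is one and $a_1$ otherwise. In particular, if the ring consists entirely of conformists, all play $a_1$.

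Second, treat these conformists as stubborn agents. The remaining game is an RS game, which has a PNE by Result~\ref{re2}. Combine it with the conformist assignment, and note that by (P) each rebel now differs from all its neighbors, including any adjacent conformists.

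It remains to check that no conformist $u$ wants to deviate. By construction $u$ matches at least one neighbor, either a run-mate or the chosen stubborn neighbor, so $u_u\ge1$. A deviation could only help if $d_u=2$, $u$ currently earns exactly $1$, and both neighbors share a common strategy $b\neq x_u$. This cannot happen, for three reasons:
\begin{itemize}
\item A run-mate always plays $x_u$, so it cannot be one of two neighbors both playing $b\neq x_u$.
\item A rebel neighbor's strategy differs from $x_u$ by (P), but the other neighbor of $u$ is then a conformist run-mate or a stubborn agent playing $x_u$.
\item Two stubborn neighbors with a common strategy were handled by assigning $u$ that common strategy.
\end{itemize}
Hence the profile is a PNE.

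I expect the main obstacle to be the bookkeeping in this last verification. It requires listing the neighbor configurations of a degree-2 conformist and making sure the rebels' second-stage choices cannot undo the conformists' best responses. Property (P) is what resolves this.
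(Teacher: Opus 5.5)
Your proposal is correct and follows essentially the paper's proof: sufficiency uses the same run-by-run assignment of conformists (an adjacent stubborn agent's strategy if one exists, otherwise a common arbitrary strategy) followed by the reduction to an RS game via Result 2, and necessity rests on the same idea as the paper's endpoint/interior rebel-deviation argument, which your property (P) states once uniformly. The closing bullet-point check is more than you need, and (P) is not actually used there: once $u$ matches a neighbor whose strategy is fixed (a run-mate or a stubborn agent), its neighbors cannot both play some $b\neq x_u$, so no deviation helps regardless of what the rebels do.
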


\begin{proof}
\medskip
\emph{Sufficiency.} Suppose every conformist has a neighbor in $C \cup S$. Consider each maximal connected component of conformists. If the component is adjacent to a stubborn agent, assign all conformists in it the strategy of that stubborn agent (choosing arbitrarily if there are several). Otherwise, the component has no stubborn neighbor and therefore, by assumption, contains at least two conformists; assign them an arbitrary common strategy. In either case, each conformist shares a strategy with at least one neighbor and has no incentive to deviate. Treating these conformists as stubborn reduces the game to an RS game, which is a potential game and hence admits a PNE. Thus, the original CRS game has a PNE.

\medskip
\emph{Necessity.} Suppose that a PNE $\mathbf{x}^*$ exists but a conformist $i$ has only rebel neighbors. If $i$ is an endpoint of a line, let $j$ be its unique neighbor. If $x_i^* \neq x_j^*$, then $i$ would profit by switching to $x_j^*$. If $x_i^* = x_j^*$, rebel $j$ can profitably deviate to a strategy different from both of its neighbors. Hence, no such PNE can exist. If $i$ is an interior node with rebel neighbors $j$ and $k$, then in equilibrium we must have $x_i^* \neq x_j^*$ and $x_i^* \neq x_k^*$. Otherwise, the rebel whose strategy matches $x_i^*$ could profitably deviate. But then conformist $i$ receives zero payoff and could improve by adopting either $x_j^*$ or $x_k^*$, a contradiction. Therefore, any conformist must have at least one neighbor in $C \cup S$.
\end{proof}

Figure 2 illustrates PNE in line and ring networks for $3$-strategy CRS game. By Theorem \ref{the4}, a CRS game on a ring does not have a PNE if and only if some conformists have two rebel neighbors (an $R$--$C$--$R$ pattern). On a line, a PNE fails to exist if and only if some conformists either occupy an endpoint whose only neighbor is a rebel or has two rebel neighbors. 

\begin{figure}[htbp]
    \centering
    \includegraphics[width=0.6 \textwidth]{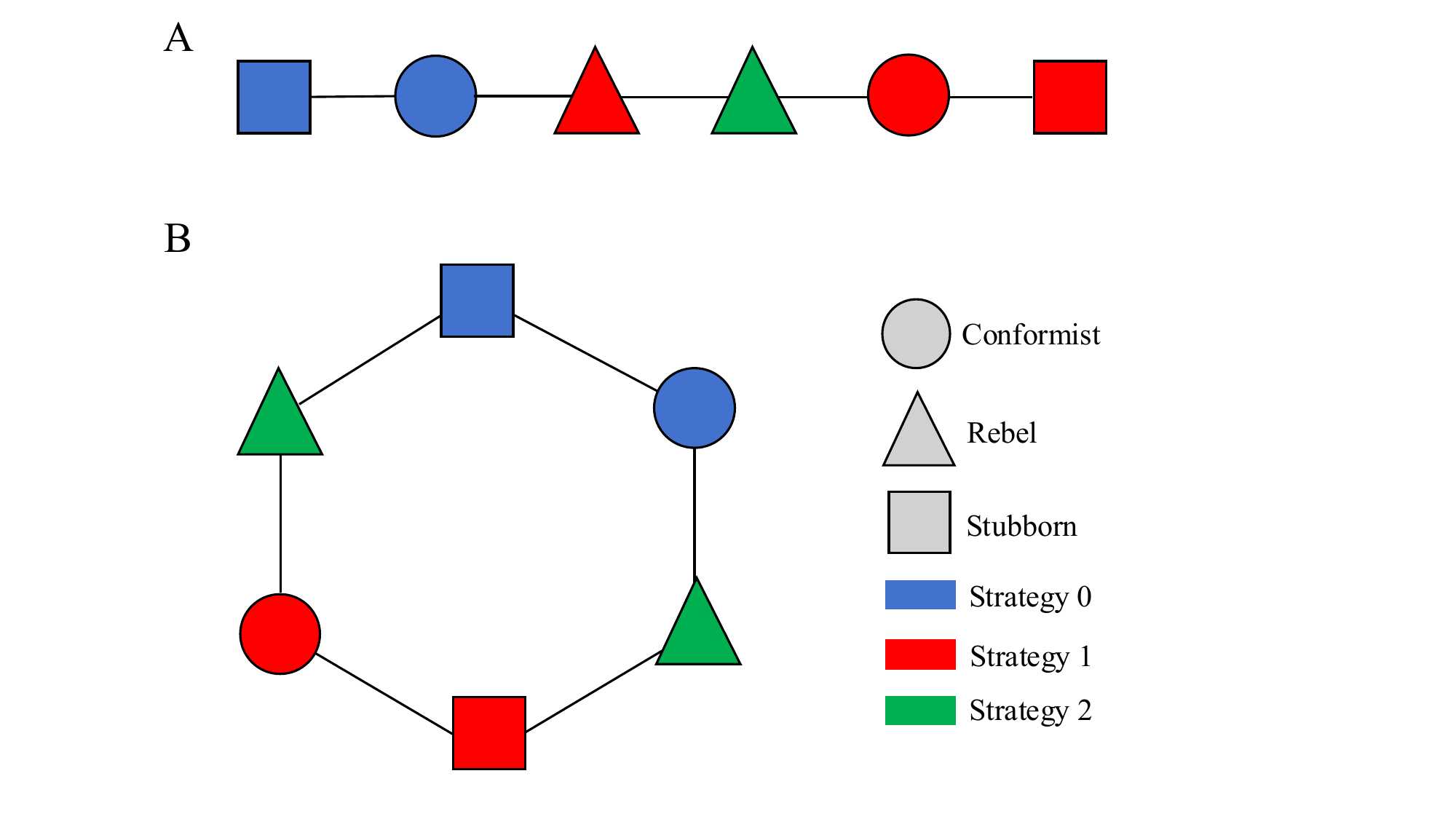}
    \caption{Examples of PNE in 3-strategy networked CRS games. (A) A PNE on a line. (B) A PNE on a ring. Circles, triangles, and squares represent conformist, rebel, and stubborn agents, respectively; blue, red, and green correspond to strategies 0, 1, and 2, respectively.}
    \label{fig:line-graph}
\end{figure}


\subsection{Tree}

Trees, characterized by the absence of cycles, are another widely studied class of networks. The following theorem provides necessary and sufficient conditions for a PNE when the number of strategies exceeds the maximum degree of the tree (i.e., $m> \Delta$).

\begin{theorem}\label{the5}
An $m$-strategy CRS game on a finite tree with maximum degree $\Delta<m$ admits a PNE if and only if every conformist has a neighbor in $C\cup S$.
\end{theorem}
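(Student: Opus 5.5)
The plan is to prove necessity by a short local argument that uses $\Delta<m$, and sufficiency by an explicit construction along the tree.

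\emph{Necessity.} Suppose $\mathbf{x}^*$ is a PNE and some conformist $i$ has only rebel neighbors. The key observation is that every rebel $j$ has $d_j\le\Delta<m$. So at least one strategy is used by none of $j$'s neighbors, and $j$'s best payoff is $d_j$. Hence at a PNE each rebel's strategy appears \emph{zero} times among its neighbors; in particular $x_j^*\neq x_i^*$ for every $j\in N_i$. Then $u_i(\mathbf{x}^*)=0$, while switching to $x_j^*$ for any neighbor $j$ gives $i$ payoff at least $1$, a contradiction. This is the same mechanism as in the proof of Theorem~\ref{the4}, with $\Delta<m$ replacing the degree-two argument.

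\emph{Sufficiency.} A naive copy of the line/ring proof fails. On a tree, giving each conformist component a common strategy no longer suffices: a conformist may see several stubborn or rebel neighbors sharing another strategy. I would therefore proceed in two stages.

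\textbf{Stage 1 (conformists).} Delete all rebels and consider the CS game on the remaining forest. By Result~\ref{re1} it has a PNE $\mathbf{y}$. Since every conformist has a neighbor in $C\cup S$, each conformist $i$ then has $\mu_i:=|\{j\in N_i\cap(C\cup S):y_j=y_i\}|\ge 1$, and $\mu_i$ is the maximal count over strategies among $i$'s $C\cup S$ neighbors.

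\textbf{Stage 2 (rebels).} Fix $\mathbf{y}$ on $C\cup S$ and assign rebels so that two properties hold. First, each rebel's strategy is unused by all its neighbors, which makes rebels best-responding. Second, for each conformist $i$, adding the rebel neighbors of $i$ keeps $y_i$ a most common strategy in $N_i$. A sufficient form of the second property is that each rebel neighbor of $i$ uses a strategy whose total count in $N_i$ stays $\le\mu_i$, and $\neq y_i$ (so $\mu_i$ itself is unchanged).

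I would carry out Stage 2 greedily. Root the tree and process rebels in BFS order; each rebel $v$ picks a strategy avoiding the finitely many forbidden values. These are: the strategies of $v$'s neighbors; for each conformist neighbor $i$, the strategy $y_i$; and any strategy that already attains count $\mu_i$ in $N_i$ from previously fixed nodes. Rebel children of $v$ must in turn avoid $v$'s choice, which the BFS order lets them do.

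\textbf{Main obstacle.} The hard part is the counting in Stage 2: showing the forbidden set always has size $<m$ using only $\Delta<m$. The number of distinct strategies already saturating count $\mu_i$ in $N_i$ is at most $(d_i-1)/\mu_i$. But a rebel may have several conformist neighbors, and a naive union bound over them could exceed $m$. My first attempt would exploit the tree structure. All of $v$'s neighbors other than its parent are children, and each child conformist $c$ has only $v$ as its "upward" rebel. So one can postpone the constraints coming from $c$'s \emph{other} rebel neighbors (its own children) to when those are processed. Then $v$ only needs to avoid: its parent's strategy; strategies saturating at its parent (if the parent is a conformist); and, for each conformist child $c$, the value $y_c$ together with strategies saturated in $N_c\cap(C\cup S)$.

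If this bound still fails, the fallback is to interleave the stages. I would choose conformist strategies top-down as well, preferring the parent's strategy when the parent lies in $C\cup S$. This keeps each conformist's saturated set small and leaves enough free colors, since every node has at most $\Delta\le m-1$ neighbors.
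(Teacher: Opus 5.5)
Your necessity argument is correct and coincides with the paper's: every rebel has degree at most $\Delta<m$, so at a PNE it disagrees with all its neighbors. A conformist with only rebel neighbors then earns $0$ and can deviate.

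The gap is in sufficiency, exactly at the step you call the main obstacle, and it cannot be closed, because the ``if'' direction is false under $\Delta<m$ alone. Take $m=4$ and the following tree: a conformist $i$ is adjacent to a stubborn agent $s$ playing $a_1$ and to two rebels $r_1,r_2$, and each $r_j$ has two further stubborn leaf neighbors playing $a_3$ and $a_4$. Then $\Delta=3<m$, and the only conformist has a neighbor in $S$. At any PNE each $r_j$ must play a strategy outside $\{x_i,a_3,a_4\}$.
\begin{itemize}
\item If $x_i\in\{a_3,a_4\}$, no neighbor of $i$ plays $x_i$ while $s$ plays $a_1$, so $i$ deviates.
\item If $x_i=a_1$, both rebels are forced to $a_2$, and $i$ gains by switching to $a_2$ (payoff $2>1$).
\item If $x_i=a_2$, both rebels are forced to $a_1$, and $i$ gains by switching to $a_1$.
\end{itemize}
Hence no PNE exists.

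In your scheme, Stage~1 forces $y_i=a_1$ with $\mu_i=1$. Then $r_1$ must take $a_2$, which saturates, and $r_2$ has all four strategies forbidden. Your refined forbidden list for $v$ also omits $v$'s stubborn children, whose strategies $v$ must avoid; they are what push the count up to $m$. Interleaving the stages cannot help, since no equilibrium exists at all.

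Your remark that matching a single neighbor in $C\cup S$ does not make a conformist best-respond is correct. It is precisely the unjustified step in the paper's own sufficiency sketch. That sketch asserts that a profile in which each conformist matches one neighbor in $C\cup S$ and each rebel disagrees with all its neighbors is a PNE, and the example above refutes it too.

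A true statement needs a stronger hypothesis. For instance, if $m>\Delta^2$ your two-stage scheme works on any graph. After Stage~1, let each rebel in turn avoid the strategies of all already-assigned nodes within distance two; there are at most $\Delta^2$ of them. Then:
\begin{itemize}
\item every rebel disagrees with all its neighbors;
\item for each conformist $i$, every strategy used by a rebel neighbor differs from $y_i$ and occurs exactly once in $N_i$;
\item every other strategy has count at most $\mu_i$ by Stage~1.
\end{itemize}
So $y_i$, with count $\mu_i\ge 1$, remains a best response.
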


\begin{proof}
\medskip
\emph{Sufficiency.} Suppose that every conformist has a neighbor in $C\cup S$. Since $G$ is a tree and $m >\Delta$, we can assign strategies inductively along the tree (for instance, starting from a conformist adjacent to a stubborn agent, see Figure 3A) so that each conformist matches at least one neighbor in $C\cup S$, while each rebel disagrees with all its neighbors. Every conformist thereby plays a best response, and every rebel attains its maximum payoff. The resulting profile is therefore a PNE.

\medskip
\emph{Necessity.} Suppose a PNE $\mathbf{x}^*$ exists but a conformist $i$ has only rebel neighbors. For any $j \in N_i$, rebel $j$ can adopt a strategy different from all the strategies of its neighbors since $m > \Delta$. Hence, in a PNE, we must have $x_i^* \neq x_j^*$ for every $j\in N_i$, which yields conformist $i$ a payoff of zero. Yet, conformist $i$ could switch to any neighbor's strategy and obtain a payoff of at least one, a contradiction. Therefore, every conformist must have a neighbor in $C\cup S$.

\end{proof}

We note that when $m \leq \Delta$, a CRS game on a tree may not have a PNE even if every conformist has a neighbor in $C\cup S$. Figure 3B provides a simple example with $\Delta=m+1$, where the rebel and the conformist will change their choices all the time: the rebel adopts the strategy contrary to the conformist, but then the conformist changes, making the rebel change, and so on. Thus, providing necessary and sufficient conditions for a PNE when $m \leq \Delta$ remains a challenge question.

\begin{figure}[htbp]
    \centering
    \includegraphics[width=0.8 \textwidth]{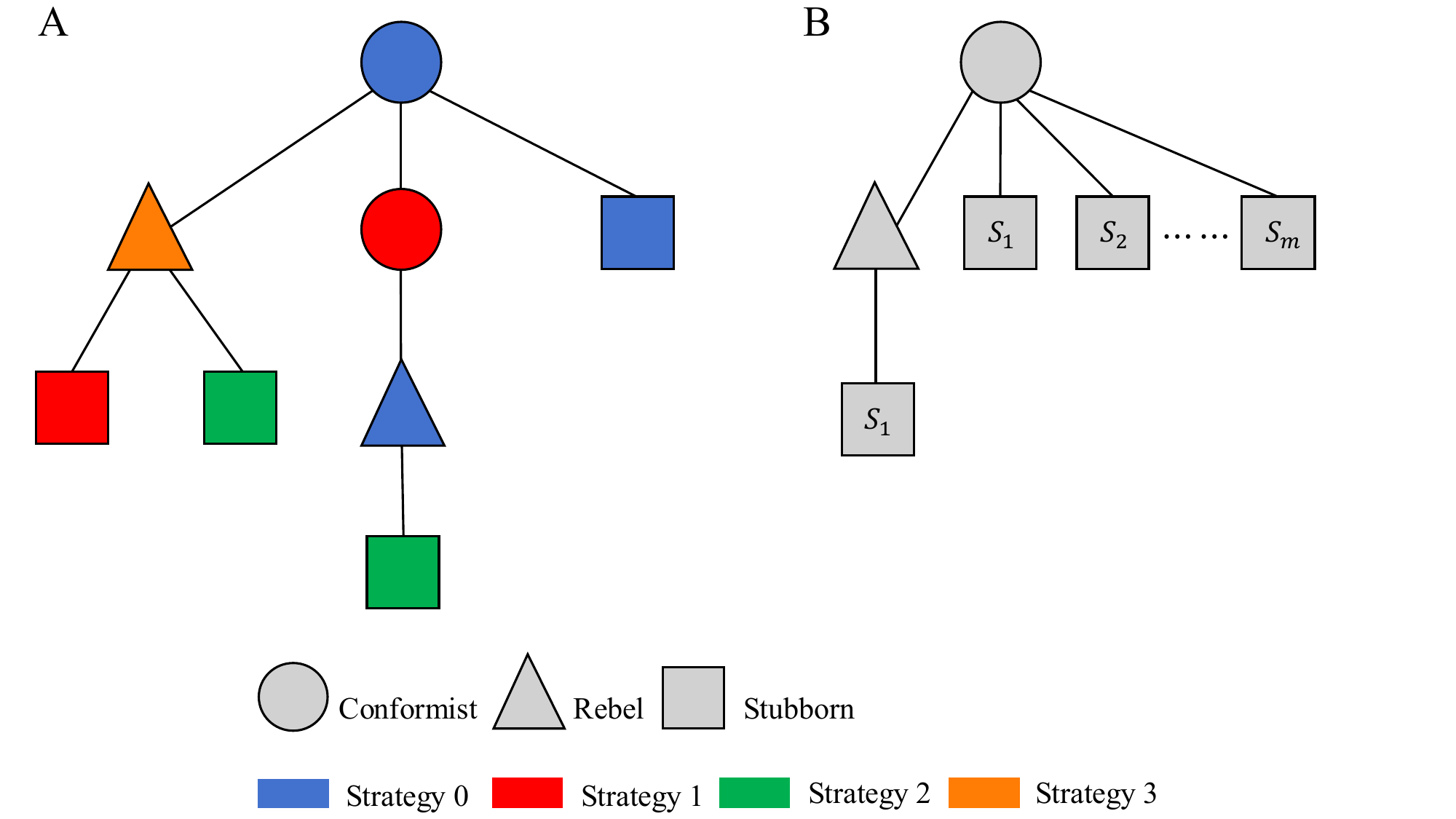}
    \caption{Examples of CRS games on tree networks. (A) A tree admitting a PNE. (B) A tree admitting no PNE. Circles, triangles, and squares represent conformist, rebel, and stubborn agents, respectively. In (A), the colors indicate the assigned strategies. In (B), $S_1,\ldots,S_m$ are stubborn agents fixed at $m$ distinct strategies.} 
    \label{fig:tree-graph}
\end{figure}

A special case of tree is star. A game $G$ is played on a star if there exists a player $i$ such that $E=\{ij: j\in N, j\neq i\}$, where $i$ is the central player and the others are leaves. Theorem \ref{the5} implies that when $m>|L|$, a CRS game on a star admits a PNE if and only if one of the following holds: (i) the central player is a conformist and not all leaves are rebels; (ii) the central player is a rebel and no leaf is a conformist; (iii) the central player is stubborn.

The star admits a sharper PNE characterization when $m \leq \Delta$. In this case, the existence of a PNE depends not only on the type of central player, but also on the composition of leaf types. Let $G$ be a star network with central player $i$ and leaf set $L=N_i(G)$. For each strategy $a\in A$, let $s_a$ denote the number of stubborn leaves fixed at $a$, and let $c_L$ and $r_L$ denote the numbers of conformist and rebel leaves, respectively.

\begin{figure}[htbp]
    \centering
    \includegraphics[width=0.8\textwidth]{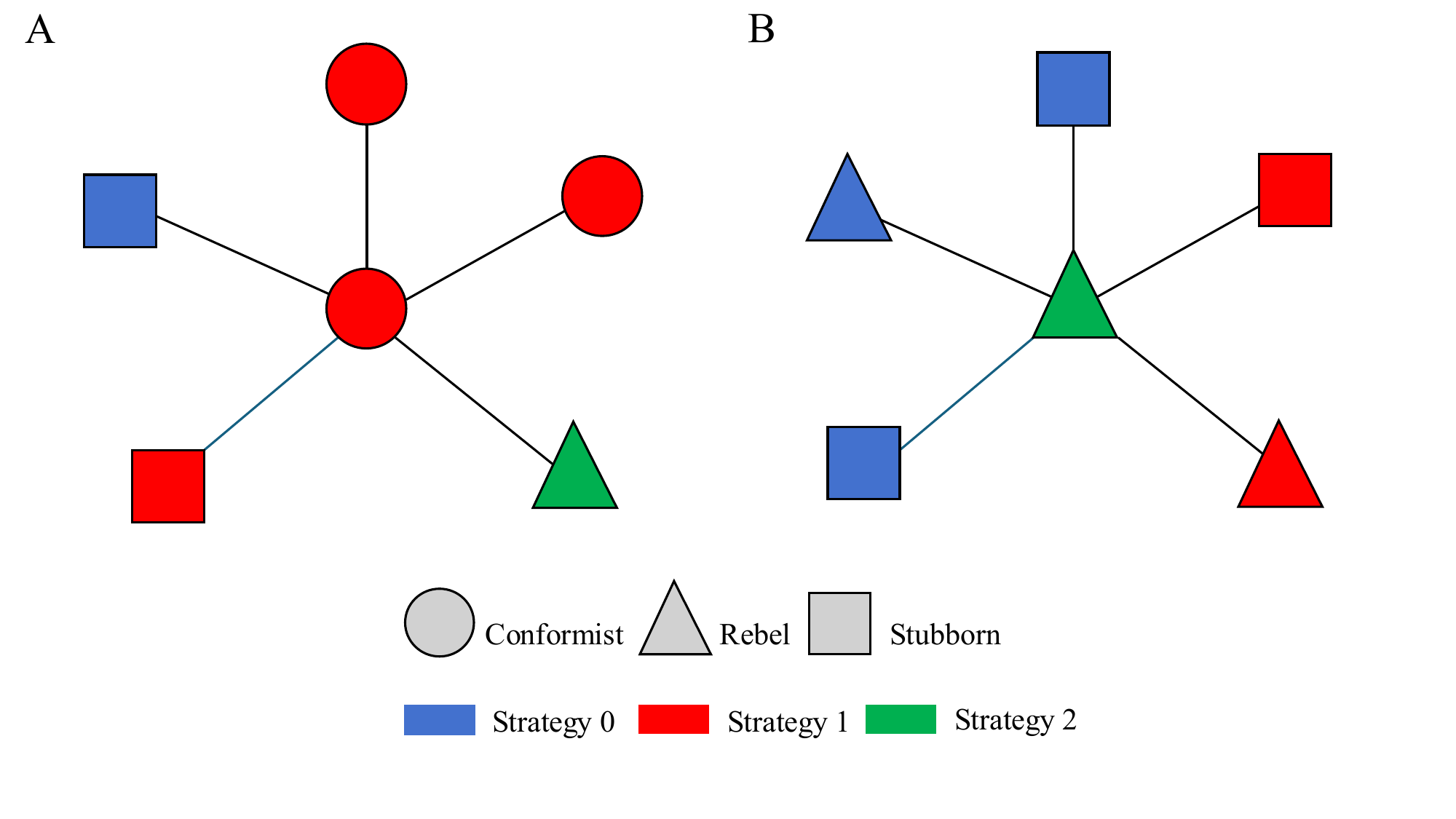}
    \caption{Examples of PNE in 3-strategy CRS games on star networks. (A) A PNE with a conformist at the center. (B) A PNE with a rebel at the center. Circles, triangles, and squares represent conformist, rebel, and stubborn agents, respectively; blue, red, and green correspond to strategies 0, 1, and 2, respectively.}
    \label{fig:star-graph}
\end{figure}

\begin{corollary}\label{cor2}
An $m$-strategy CRS game on a star admits a PNE if and only if one of the following conditions holds:
(i) the central player is a conformist and there exists $a\in A$ such that $r_L\leq \sum_{b\neq a} \max\{0,c_L+s_a-s_b\}$;
(ii) the central player is a rebel and there exists $a\in A$ such that $r_L\geq \sum_{b\neq a} \max\{0,c_L+s_a-s_b\}$;
(iii) The central player is stubborn.
\end{corollary}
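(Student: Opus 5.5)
The proof rests on one observation about leaves. A leaf has degree one, so its best response depends only on the strategy $x_i$ of the central player. A conformist leaf must play $x_i$. A rebel leaf may play any strategy different from $x_i$, and every such choice gives it payoff $1$. A stubborn leaf is fixed. So once $x_i=a$ is chosen, the leaves' only freedom is how the $r_L$ rebel leaves are spread over $A\setminus\{a\}$. Let $r_b$ be the number of rebel leaves playing $b\neq a$, so that $\sum_{b\neq a}r_b=r_L$. The centre then sees $c_L+s_a$ neighbours playing $a$ and $s_b+r_b$ neighbours playing each $b\neq a$. I will handle the three types of central player separately, writing $f(a):=\sum_{b\neq a}\max\{0,c_L+s_a-s_b\}$.

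\emph{Case (iii), stubborn centre.} Every leaf best-responds to the fixed $x_i$ as described above, and the centre has no incentive constraint. Hence a PNE always exists.

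\emph{Case (i), conformist centre.} For necessity, take a PNE with $x_i=a$. The centre's best-response condition gives $c_L+s_a\ge s_b+r_b$, i.e. $r_b\le c_L+s_a-s_b$, for every $b\neq a$. Since $r_b\ge 0$, this also gives $r_b\le \max\{0,c_L+s_a-s_b\}$, and summing over $b\neq a$ yields $r_L\le f(a)$.

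For sufficiency, a given $a$ with $r_L\le f(a)$ need not work directly: some $b$ may have $s_b>c_L+s_a$, and then the centre prefers $b$ even when $r_b=0$. I will first replace $a$ by a strategy $a^*$ with maximal $s_{a^*}$ and show $f(a^*)\ge f(a)$. All terms indexed by $b\notin\{a,a^*\}$ weakly increase when $s_a$ is replaced by $s_{a^*}$. The term $\max\{0,c_L+s_a-s_{a^*}\}$, which appears in $f(a)$ but not in $f(a^*)$, is at most $c_L$. The term $c_L+s_{a^*}-s_a$, which appears in $f(a^*)$ but not in $f(a)$, is at least $c_L$. So $r_L\le f(a^*)$. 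At $a^*$ every quantity $c_L+s_{a^*}-s_b$ is nonnegative, so I can place $r_b\le c_L+s_{a^*}-s_b$ rebel leaves on each $b\neq a^*$ with total $r_L$. Setting $x_i=a^*$ and all conformist leaves to $a^*$ then gives a PNE.

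\emph{Case (ii), rebel centre.} For necessity, take a PNE with $x_i=a$. The centre's condition is $c_L+s_a\le s_b+r_b$, i.e. $r_b\ge c_L+s_a-s_b$, for all $b\neq a$. Combined with $r_b\ge 0$, summing gives $r_L\ge f(a)$.

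For sufficiency, suppose $r_L\ge f(a)$. Set $x_i=a$ and put all conformist leaves on $a$. Assign $\max\{0,c_L+s_a-s_b\}$ rebel leaves to each $b\neq a$, and place the surplus $r_L-f(a)$ on any strategies other than $a$; this needs $m\ge 2$, which holds because the centre being a rebel only matters when $m\ge2$. Adding rebels to some $b\neq a$ only raises that $b$'s count, so the centre's best-response inequalities are preserved and $a$ stays least common among its neighbours.

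The main obstacle is the sufficiency step in (i). It requires the monotonicity argument above, which shows that the witness $a$ can be taken to be a most-popular stubborn strategy; once that is done, the remaining steps are routine counting.
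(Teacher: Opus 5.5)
Your proof is correct. Cases (ii), (iii) and the necessity half of (i) use the same leaf-by-leaf counting as the paper: conformist leaves copy the centre, rebel leaves avoid it, and the centre's best-response inequalities are summed.

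The genuine difference is the sufficiency half of (i). The paper fixes the given witness $a$ and argues that at most $\max\{0,c_L+s_a-s_b\}$ rebel leaves can be placed on each $b\neq a$. It then concludes that a PNE with the centre at $a$ exists if and only if $r_L\le\sum_{b\neq a}\max\{0,c_L+s_a-s_b\}$. As you noticed, this per-$a$ equivalence fails when some $s_b>c_L+s_a$.

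For example, take $m=3$, $c_L=1$, $r_L=1$, $s_{a_1}=s_{a_3}=0$ and $s_{a_2}=5$. The inequality holds at $a=a_1$, since the sum equals $1$. Yet with the centre at $a_1$, strategy $a_2$ already appears at least $5$ times among the leaves, against $1$ for $a_1$, so the conformist centre is not best-responding. The corollary is still true only because another witness, here $a_2$, works.

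Your exchange argument shows $f(a^*)\ge f(a)$ for a strategy $a^*$ maximizing $s_{a^*}$. At $a^*$ every bound $c_L+s_{a^*}-s_b$ is nonnegative, so the counting goes through. This is exactly what the paper's sufficiency argument is missing, so your version gives a complete proof of (i) at the cost of one short lemma.

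One small correction to your side remark about $m\ge 2$. The place where it is really needed is the necessity half of (i), where you use that a rebel leaf must play something other than the centre's strategy. For $m=1$, a conformist centre and $r_L\ge 1$, the unique profile is trivially a PNE while (i) fails. So $m\ge 2$ is a standing assumption of the statement itself, not something specific to case (ii).
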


\begin{proof}
(i) Suppose that the central player is a conformist and plays $a\in A$. At any PNE, every conformist leaf also plays $a$, while every rebel leaf chooses a strategy different from $a$. Thus, strategy $a$ appears $c_L+s_a$ times among leaves. For any $b\neq a$, at most $\max\{0,c_L+s_a-s_b\}$ rebel leaves can be assigned strategy $b$ without making $b$ more frequent than $a$. Hence, all rebel leaves can be accommodated in this way if and only if
$r_L\le \sum_{b\in A\setminus{a}} \max\{0,c_L+s_a-s_b\}$. Therefore, a PNE exists if and only if this inequality holds for some $a\in A$ (Figure 3A).

(ii) Suppose the central player is a rebel and plays  $a\in A$. An analogous counting argument shows that strategy $a$ is not the most frequent leaf strategy if and only if
$r_L\ge \sum_{b\in A\setminus{a}} \max\{0,c_L+s_a-s_b\}$. Consequently, a PNE exists if and only if this inequality is satisfied for some $a\in A$  (Figure 3B).

(iii) If the central player is stubborn, fix its strategy. Assign every conformist leaf that same strategy and every rebel leaf a different one. The resulting profile is a PNE.
\end{proof}

\section{Discussion}
Our paper advances the understanding of strategic interactions in $m$-strategy network games with the coexistence of strategic complements and strategic substitutes by making two main contributions. First, we derive sufficient conditions for the existence of a PNE on arbitrary networks, while proving that equilibria almost surely vanish in large random networks. Second, we derive necessary and sufficient conditions for PNE existence and fully characterize the equilibrium strategy frequencies for complete network, lines, rings, trees, and stars. Collectively, these results offer a unified perspective that PNE is likely to exist when every conformist has more conformist and stubborn neighbors, and fails when the network game has numerous CR edges. This perspective is consistent with previous studies on 2‑strategy CR and CRS games \citep{Zhang2018,Cao2026}.

Although our work assumes that conformists and rebels treat all strategies symmetrically, several findings extend to the more general case in which they have strategy specific preferences. Let $v_a>0$ and $w_a>0$ be
strategy-specific weights associated with strategy $a\in A$. In this variant, the payoffs for the three player types can be written as 
\[
u_i(\mathbf{x})=
\begin{cases}
v_{x_i}\left|\{j\in N_i:x_j=x_i\}\right|, & i\in C,\\[1mm]
w_{x_i}\left|\{j\in N_i:x_j\neq x_i\}\right|, & i\in R,\\[1mm]
0, & i\in S.
\end{cases}
\]
A larger $v_a$ makes strategy $a$ more attractive for conformists, while
a larger $w_a$ makes strategy $a$ more attractive for rebels when they
disagree with their neighbors. Theorem \ref{the1}(i) (together with Result 1 and Result 2) continues to hold because a PNE is guaranteed in games without CR edges, with potential functions for the CS and RS games given by 
$$
\phi_{CS}(\boldsymbol{x}):=\frac{1}{2}\left(\sum_{k \in C}v_{x_k}\left|\left\{j \in N_k: x_j = x_k\right\}\right|+\sum_{k \in C}v_{x_k}\left|\left\{j \in N_k \cap S: x_j = x_k\right\}\right|\right)
$$
and 
\[
\begin{aligned}
\phi_{RS}(\mathbf{x}):=\frac{1}{2}\Bigg(&\sum_{k\in R} w_{x_k}\left|\{j\in N_k:x_j\neq x_k\}\right| +\sum_{k\in R} w_{x_k}\left|\{j\in N_k\cap S:x_j\neq x_k\}\right|  \\ &+\sum_{k\in R} w_{x_k}|N_k\cap R| \Bigg),
\end{aligned}
\]
respectively. For any unilateral deviation by a conformist in the CS game or by a rebel in the RS game, the change in the corresponding potential equals the change in that player's payoff. Hence, both weighted CS and weighted RS
games are exact potential games and admit a PNE. In addition, Theorem \ref{the1}(ii) can also extend to this variant case. Let $v_{\max}:=\max_{b\in A}v_b.$ If there exists a strategy $a\in A$ such that each conformist has at least $\frac{v_{\max}}{v_a+v_{\max}}$ of his/her neighbors in $C \cup S_a$, then all conformists have no incentive to deviate, reducing the game to an RS game that admits a PNE. However, generalizing the remaining theorems to this variant requires further analysis.

While our paper establishes necessary and sufficient conditions for the existence of PNE on specific network topologies, the question of uniqueness remains open. Prior work on strategic complements and substitutes links equilibrium uniqueness to the lowest eigenvalue of the network matrix \citep{Bramoulle2014}. A natural direction for future research is therefore to characterize PNE uniqueness conditions in CRS games, possibly building on such spectral properties.

\section*{Acknowledgments}
Wenjie Cao acknowledges support from the National Natural Science Foundation of China (No.724B2006) and the China Scholarship Council (No.\allowbreak 202406040147). Boyu Zhang acknowledges support from the National Natural Science Foundation of China (No.72131003 and No.72573024) and the Beijing Natural Science Foundation (No.Z220001). 

\newpage

\end{document}